\algnewcommand\algorithmicinput{\textbf{Offline:}}
\algnewcommand\Offline{\item[\algorithmicinput]}
\algnewcommand\algorithmicoutput{\textbf{Online:}}
\algnewcommand\Online{\item[\algorithmicoutput]}
\newtheorem{theo}{Theorem}
\newtheorem{assu}{Assumption}
\newtheorem{lemma}{Lemma}
\newtheorem{coro}{Corollary}
\newtheorem{remark}{Remark}
\title{\LARGE \bf
Adaptive Output Feedback Model Predictive Control
}
\author{Anchita Dey, Abhishek Dhar and Shubhendu Bhasin
\thanks{Anchita Dey \textit{(corresponding author)} and Shubhendu Bhasin  are with the Department of Electrical Engineering, Indian Institute of Technology Delhi, Hauz Khas, New Delhi, Delhi 110016, India {\tt\small \{anchita.dey, sbhasin\}@ee.iitd.ac.in.}}%
\thanks{Abhishek Dhar is with the Department of Electrical Engineering, Link{\"o}ping University, Link{\"o}ping 58183, Sweden 
        {\tt\small abhishek.dharr@gmail.com.}}%
}
\begin{document}

\maketitle
\thispagestyle{empty}
\pagestyle{empty}

\begin{abstract}
Model predictive control (MPC) for uncertain systems in the presence of hard constraints on state and input is a non-trivial problem, and the challenge is increased manyfold in the absence of state measurements. In this paper, we propose an adaptive output feedback MPC technique, based on a novel combination of an adaptive observer and robust MPC, for single-input single-output discrete-time linear time-invariant systems. At each time instant, the adaptive observer provides estimates of the states and the system parameters that are then leveraged in the MPC optimization routine while robustly accounting for the estimation errors. The solution to the optimization problem results in a homothetic tube where the state estimate trajectory lies. The true state evolves inside a larger outer tube obtained by augmenting a set, invariant to the state estimation error, around the homothetic tube sections. The proof for recursive feasibility for the proposed `homothetic and invariant' two-tube approach is provided, along with simulation results on an academic system.
\end{abstract}

\section{Introduction}

\noindent Model predictive control (MPC) \cite{kouvaritakis2016model} is a well-known technique for making optimal control decisions in the presence of hard constraints on states and control inputs. The classical MPC relies on exact system model knowledge without consideration for any uncertainty or noise. Unfortunately, it is practically challenging to obtain an approximate, let alone an accurate model of the system. Thus, there are efforts to relax the need for `M' in MPC. Several questions, however, arise. Can state predictions, which are at the heart of MPC, be reliably made without an accurate model? How to account for errors in those state predictions? Is recursive feasibility guaranteed? These questions become harder to answer when the state measurements are also unavailable.

There has already been a lot of progress on how to deal with uncertainties in the system model, mostly using a robust tube-based approach \cite{langson2004robust}, \cite{rakovic2012homothetic}. However, robust approaches typically handle additive disturbances. To account for parametric uncertainties in the model, adaptive MPC methods were proposed, where the goal was to learn the unknown parameters for better transient performance \cite{lorenzen2019robust}\nocite{hernandez2016persistently}\nocite{dhar2021indirect}-\cite{zhu2019constrained}. Another learning-based approach is the iterative learning-based MPC \cite{rosolia2017learning}, \cite{bujarbaruah2018adaptive} where learning is done to increase the accessible safe region for the states to evolve. This method is particularly useful for applications involving repetitive tasks.

A restrictive assumption made in most robust and adaptive MPC literature is the availability of state measurements. To obviate this requirement, \cite{mayne2006robust} and \cite{kogel2017robust}  use a Luenberger observer to estimate the states, and then design robust tubes for a system with known model parameters but additive disturbances. In \cite{subramanian2017novel}, a tube-based approach independent of the state estimation method is presented. The application-based paper \cite{ghanes2016robust} uses MPC with an observer for state estimation; the adaptation done is in terms of switching from one known system to another. Authors of \cite{brunke2021rlo} extend the work in \cite{mayne2006robust} to learning of the safe region for iterative tasks, as in \cite{rosolia2017learning}, while assuming the system parameters to be known. \textcolor{black}{Recently, data-driven techniques \cite{coulson2019data, berberich2020data} have been proposed to solve the robust MPC problem. The result hinges on learning an implicit non-parametric model (or behavior) of the system characterized by a Hankel matrix, which is obtained from sufficiently rich data and must be constructed prior to running the MPC optimization routine.} A drastically different method is adopted in \cite{nguyen2020output}, where input-output data is used to estimate future outputs with an estimated ARX model, whose parameters, along with the input to be applied, are obtained by solving recursive least square (RLS) problems. 

Yet, a common assumption in  \cite{mayne2006robust}-\cite{brunke2021rlo} is that the system parameters are known, whereas the techniques in \cite{coulson2019data}, \cite{berberich2020data} need a pre-MPC data collection phase for Hankel matrix construction. Further, since the Hankel matrix is never updated, it is unclear if \cite{coulson2019data, berberich2020data} are robust to small perturbations in the system dynamics. The approach in \cite{nguyen2020output} uses RLS with saturated control inputs, which may cause violation of any hard constraints on the outputs.

In this paper, we strategically combine an online state and parameter estimator with MPC leading to an adaptive output feedback MPC (AOFMPC) design for discrete-time single-input single-output (SISO) linear time-invariant (LTI) systems. The estimator deployed is an adaptive observer motivated from \cite{suzuki1980discrete}, \cite{kudva1974discrete}, which uses input-output data to simultaneously estimate the states and the system parameters \textit{online}. The state estimation error is accounted for by constructing an invariant set \cite{rakovic2005invariant} that is used to get a tightened constraint for the state estimates. The constrained optimal control problem (COCP) for MPC is reformulated in terms of the state and parameter estimates, i.e., the cost function along with the constraints are now defined in terms of the state and parameter estimates obtained from the observer. Another challenge is to generate future state estimate predictions required in the COCP by somehow exploiting the adaptive observer dynamics; this is non-trivial since it requires knowledge of the future parameter estimates and outputs. To get around this issue, the observer dynamics is rewritten in terms of the quantities known at the current time instant, and an additive uncertainty is considered to account for the mismatched dynamics. This uncertainty in predicted state estimates is dealt with using a homothetic tube framework \cite[sec~3.3]{langson2004robust} that ensures that the state estimate trajectory stays inside the homothetic tube. The set invariant to the state estimation errors is added to the homothetic tube sections to obtain a larger outer tube, inside which the true state evolves. This leads to a unique two-tube architecture: an inner homothetic tube and an outer invariant-added-to-homothetic tube.

The novel framework of an invariant annular portion around a homothetic tube enables us to relax the assumption on availability of state measurements and system parameters. In addition, unlike \cite{coulson2019data}-\cite{nguyen2020output}, \cite{heirung2017dual} that use input and output constraints, the use of a state estimator in the proposed framework allows imposition of user-defined constraints on all the states and inputs. 
   

\textit{Notations: }$||\cdot||_q$ denotes $q$-norm of a vector 
where $q\in\{2,\infty\}$. \textcolor{black}{A matrix $V\succ 0$ implies $V$ is symmetric positive-definite. For a vector $g$ and a matrix $V\succ 0$, $||g||^2_V\triangleq g^TVg$.  For sets $\mathbb{A}$ and $\mathbb{B}$, Minkowski sum $\mathbb{A}\oplus\mathbb{B}\triangleq \{g+h|\;g\in\mathbb{A},\;h\in\mathbb{B}\}$, Pontryagin difference $\mathbb{A}\ominus\mathbb{B}\triangleq\{g|\;g+h\in\mathbb{A}\;\forall h\in\mathbb{B}\}$, and $conv(\mathbb{A})$ denotes convex hull of all elements in $\mathbb{A}$.} $\mathbb{I}_g^h\triangleq\{g,g+1,...,h-1,h\}$ with integers $g$ and $h>g$, $I_g\in\mathbb{R}^{g\times g}$ is the Identity matrix and $0_{g\times h}\in\mathbb{R}^{g\times h}$ is the zero matrix. The value of $g$ at time $t+i$ predicted at time $t$ is denoted by $g_{i|t}$. $g\in \mathcal{L}_\infty$ implies $g$ is bounded.

\section{Problem Formulation}
Consider the constrained discrete-time SISO LTI system
\begin{align}
    &x_{t+1}=Ax_t+bu_t\text{ , }\;\;y_t=cx_t \label{osys1}\\
    &x_t\in\mathcal{X}\text{ , }\;u_t\in\mathcal{U} \hspace{0.4cm}\forall t\in\mathbb{I}_0^\infty \label{hc1}
\end{align}
where $x_t\in\mathbb{R}^n$, $u_t\in\mathbb{R}$ and $y_t\in\mathbb{R}$ denote the state, input and output, respectively at time $t$. $\mathcal{X}$and $\mathcal{U}$ are known compact, convex polytopes containing their respective origins. The parameters $A\in\mathbb{R}^{n\times n}$ and $b,c^T\in\mathbb{R}^n$ are unknown constants. The goal is to stabilize the system while satisfying hard state and input constraints (\ref{hc1}). Provided the state measurements and system parameters are available, this is achievable with classical MPC \cite{kouvaritakis2016model} by solving the following.
\begin{equation*}
\mathbb{OP}1:\text{ }\min_{\mu_t}\text{ }J(x_t,\mu_t)\triangleq \sum_{i=0}^{N-1}\Big{(}||x_{i|t}||^2_{{Q}}+{{r}}u_{i|t}^2\Big{)}+||x_{N|t}||^2_{\bar{P}}
\end{equation*}
\begin{equation*}
\text{subject to }x_{0|t}=x_t\text{, }x_{i|t}\in\mathcal{X}\hspace{0.2cm}\forall i\in\mathbb{I}_0^{N}\text{, } u_{i|t}\in\mathcal{U}\hspace{0.2cm}\forall i\in\mathbb{I}_{0}^{N-1}\text{, }
\end{equation*}
\begin{equation*}
   x_{N|t}\in\mathcal{X}_T\subseteq\mathcal{X}\text{ and }x_{i+1|t}=Ax_{i|t}+bu_{i|t}\hspace{0.2cm}\forall i\in\mathbb{I}_{0}^{N-1}
\end{equation*}
where $\mu_t\triangleq\{u_{0|t}, u_{1|t},...,u_{N-1|t}\}$, $N$ is the prediction horizon length (for states as well as control), $Q$, $\bar{P}\in\mathbb{R}^{n\times n}$, $r\in\mathbb{R}$ with ${Q}$, $\bar{P}\succ 0$, ${r}>0$  and $\mathcal{X}_T$ is the terminal set that contains the origin and is related to $\bar{P}$ \cite[ch.~2]{kouvaritakis2016model}. 

In the absence of state measurements and accurate knowledge of $A$ and $b$, $\mathbb{OP}1$ is not solvable. In this work, an adaptive observer is used to simultaneously estimate the state and system parameters. The estimates are used to solve a reformulated COCP in a robust framework with a receding horizon approach \cite{kouvaritakis2016model}. In developing the theory, we consider\newpage\noindent the following standard assumptions \cite{lorenzen2019robust}, \cite{dhar2021indirect}, \cite{mayne2006robust} throughout.
\begin{assu}\label{ACa}
The state space realization given in \eqref{osys1} is observable.
\end{assu}
\begin{assu}\label{ABa}
The unknown parameter $\psi\triangleq\begin{bmatrix}A&b\end{bmatrix}\in\mathbb{R}^{n\times (n+1)}$ belongs to a set $\Psi\triangleq conv(\{\psi^{vi}\;\;|\;\; i\in\mathbb{I}_1^L\})$ where the vertices  $\psi^{v1},\psi^{v2},...,\psi^{vL}\in\mathbb{R}^{n\times (n+1)}$ are known, and $L$ is some finite positive integer. Each element in $\Psi$ forms a stabilizable system.
\end{assu}

Assumption \ref{ACa} is necessary for an \textcolor{black}{adaptive} observer-based approach \cite{suzuki1980discrete} \textcolor{black}{which hinges on the observable canonical form} 
while Assumption \ref{ABa} is considered to allow constraint tightening as well as to find a terminal set in the MPC design.
\section{Adaptive Observer for State and Parameter Estimation}
Following Assumption \ref{ACa}, there is no loss of generality if \eqref{osys1} is considered to be in the observable canonical form with
\begin{align}
    &A=\left[
    \begin{array}{c|c}
    a&\begin{array}{cc}
         I_{n-1}\\0_{1\times(n-1)}
    \end{array}
    \end{array}
    \right]\text{ where }a=\begin{bmatrix}
a_1\;\;a_2\;\;...\;\;a_n
\end{bmatrix}^T_\text{,} \label{A1}\\
   & b=\begin{bmatrix}
    b_1&b_2&...&b_n
    \end{bmatrix}^T\text{ and }c=\begin{bmatrix}1&0_{1\times(n-1)}\end{bmatrix}.\label{B1}
\end{align}
The structures in \eqref{A1} and \eqref{B1} are used to construct the observer. Let $F\in\mathbb{R}^{n\times n}$ be any user-defined Schur stable matrix of the form
\begin{equation}\label{F1}
F=\left[
    \begin{array}{c|c}
    f&\begin{array}{cc}
         I_{n-1}\\0_{1\times(n-1)}
    \end{array}
    \end{array}
    \right]\text{ with }f=\begin{bmatrix}
f_1&f_2&...&f_n
\end{bmatrix}^T.
\end{equation}
Using \eqref{A1}-\eqref{F1}, the plant dynamics in \eqref{osys1} can be re-written as
\begin{equation}\label{osys1r}
    x_{t+1}=Fx_t+(A-F)x_t+bu_t=Fx_t+(a-f)y_t+bu_t
\end{equation}
and an adaptive observer is designed with the dynamics \cite{kudva1974discrete}
\begin{equation}
    \label{AOsys1}
    \hat{x}_{t+1}=F\hat{x}_t+(\hat{a}_{t\textcolor{black}{+1}}-f)y_t+\hat{b}_{t\textcolor{black}{+1}}u_t.
\end{equation}
where $\hat{x}_t$, $\hat{a}_t\triangleq [\hat{a}_{1_t}\hspace{0.15cm}\hat{a}_{2_t}\hspace{0.15cm}...\hspace{0.15cm}\hat{a}_{n_t}]^T$, $\hat{b}_t\triangleq [\hat{b}_{1_t}\hspace{0.15cm}\hat{b}_{2_t}\hspace{0.15cm}...\hspace{0.15cm}\hat{b}_{n_t}]^T$ $\in\mathbb{R}^n$ are the estimates of $x_t$, $a$ and $b$, respectively. 
From \eqref{osys1r} and \eqref{AOsys1}, $y_t$ and the adaptive observer output $\hat{y}_t$ are respectively given by
\begin{align}
y_{t+1}=\phi_{t}^Tp+\textcolor{black}{cFx_{t}}\text{ and }
    \hat{y}_{t+1}=\phi_{t}^T\hat{p}_{t+1}+\textcolor{black}{cF\hat{x}_{t}}
\end{align}
where $p\triangleq\begin{bmatrix}(a-f)^T&b^T\end{bmatrix}^T\in\mathbb{R}^{2n}$, $\phi_t\triangleq [I_ny_t\;\;\;I_nu_t]^Tc^T\in\mathbb{R}^{2n}$ and $\hat{p}_t\triangleq\begin{bmatrix}(\hat{a}_t-f)^T&\hat{b}_t^T\end{bmatrix}^T\in\mathbb{R}^{2n}$ are the parameter vector, measurable regressor and the estimate of $p$, respectively.


Let $\tilde{x}_t\triangleq x_t-\hat{x}_t\in\mathbb{R}^n$ be the state estimation error and $\tilde{p}_t\triangleq p-\hat{p}_t\in\mathbb{R}^{2n}$ be the parameter estimation error. From \eqref{osys1r} and \eqref{AOsys1}, we can write
\begin{equation}\label{xtilde}
    \tilde{x}_{t+1}=F\tilde{x}_t+[I_ny_t\;\;\;I_nu_t]\tilde{p}_{t+1}.
\end{equation}
 Since \textcolor{black}{$\tilde{x}_0$ is finite and $F$ is Schur stable}, designing a suitable adaptation law so that $\tilde{p}_t$ converges to zero as $t\rightarrow\infty$ guarantees the convergence of $\tilde{x}_t$ to zero as $t\rightarrow\infty$ \cite{kudva1974discrete}. The following adaptive law is considered for updating $\hat{p}_t$ using an exponentially weighted least squares method \cite[ch.~3]{goodwin2014adaptive}
\begin{equation}\label{rec01}
    \hat{p}_{t}=\hat{p}_{t-1}+\Gamma_{t}\phi_{t-1}(y_{t}-\textcolor{black}{cF\hat{x}_{t-1}}-\phi_{t-1}^T\hat{p}_{t-1})
\end{equation}
\begin{equation}\label{rec02}
\Gamma_{t}=\frac{1}{\zeta^2}\bigg{[}\Gamma_{t-1}-\frac{\Gamma_{t-1}\phi_{t-1}\phi_{t-1}^T\Gamma_{t-1}}{\zeta^2+\phi_{t-1}^T\Gamma_{t-1}\phi_{t-1}}\bigg{]}
\end{equation}
where $\zeta\in(0,1)$ and $\Gamma_0=\gamma^2I_{2n}$ where $\gamma\gg1$ \cite{suzuki1980discrete}. The estimate $\hat{\psi}_t\triangleq\begin{bmatrix}\hat{A}_t&\hat{b}_t\end{bmatrix}$ (where $\hat{A}_t$ is structurally similar to\newpage\noindent $A$, except for $a$ being replaced by $\hat{a}_t$), is constructed using $\hat{p}_t$ given by \eqref{rec01}. 
\begin{lemma}\label{L01}
If the regressor $\phi_t$  is persistently exciting, then the parameter estimation error $\tilde{p}_t$ and consequently the state 
estimation error $\tilde{x}_t$ converges to zero as $t\rightarrow\infty$.\end{lemma}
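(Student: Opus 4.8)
The plan is to first reduce the claim to parameter convergence and then establish the latter through a recursive-least-squares energy argument. The state-error recursion \eqref{xtilde} is driven through the Schur-stable matrix $F$ by the term $[I_ny_t\;\;I_nu_t]\tilde{p}_{t+1}$; since the hard constraints \eqref{hc1} confine $x_t,u_t$ (hence $y_t,u_t$ and the regressor $\phi_t$) to compact sets, this driving term is bounded and, once $\tilde{p}_t\to 0$ is shown, it vanishes. A stable linear recursion fed by a vanishing input has a vanishing state, so $\tilde{x}_t\to 0$ follows from $\tilde{p}_t\to 0$, exactly as anticipated in the text preceding the lemma. The bulk of the work is therefore to prove $\tilde{p}_t\to 0$.

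First I would rewrite the covariance update \eqref{rec02} in information form. Applying the matrix inversion lemma to \eqref{rec02} gives $\Gamma_t^{-1}=\zeta^2\Gamma_{t-1}^{-1}+\phi_{t-1}\phi_{t-1}^T$, which unrolls to $\Gamma_t^{-1}=\zeta^{2t}\Gamma_0^{-1}+\sum_{k=0}^{t-1}\zeta^{2(t-1-k)}\phi_k\phi_k^T$. Because $\zeta\in(0,1)$ and $\phi_t\in\mathcal{L}_\infty$, the geometric weighting keeps the sum bounded above, while persistence of excitation supplies a uniform positive lower bound on the accumulated information over each window; together these yield $c_1I_{2n}\preceq\Gamma_t^{-1}\preceq c_2I_{2n}$ for some $0<c_1\le c_2$ and all large $t$, so that $\Gamma_t,\Gamma_t^{-1}\in\mathcal{L}_\infty$ and $V_t\triangleq\tilde{p}_t^T\Gamma_t^{-1}\tilde{p}_t$ is a legitimate, norm-equivalent measure of $\tilde{p}_t$.

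Next I would subtract \eqref{rec01} from the constant $p$ to obtain $\tilde{p}_t=\tilde{p}_{t-1}-\Gamma_t\phi_{t-1}e_t$, where substituting the true output relation into the bracket of \eqref{rec01} gives the innovation $e_t=\phi_{t-1}^T\tilde{p}_{t-1}+cF\tilde{x}_{t-1}$. Setting $\epsilon_t\triangleq\phi_{t-1}^T\tilde{p}_{t-1}$ and using the scalar identity $\phi_{t-1}^T\Gamma_t\phi_{t-1}=s/(\zeta^2+s)\triangleq\beta_t\in[0,1)$, with $s\triangleq\phi_{t-1}^T\Gamma_{t-1}\phi_{t-1}$, a direct expansion of $V_t$ yields, in the idealized decoupled setting $cF\tilde{x}_{t-1}\equiv 0$, the dissipation relation $V_t=\zeta^2V_{t-1}-(1-\beta_t)\epsilon_t^2\le\zeta^2V_{t-1}$. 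As $\zeta^2<1$ this drives $V_t\to 0$ geometrically, and combined with the uniform lower bound $\Gamma_t^{-1}\succeq c_1I_{2n}$ secured above from persistence of excitation, it forces $\tilde{p}_t\to 0$.

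The main obstacle is the coupling term $cF\tilde{x}_{t-1}$ in $e_t$ that the idealized step suppresses: in the exact expansion it contributes a cross term $-2\epsilon_tcF\tilde{x}_{t-1}$ and perturbs $\beta_te_t^2$, so the parameter- and state-error recursions are genuinely interconnected rather than a clean cascade. I would resolve this by analyzing \eqref{xtilde} and the $\tilde{p}$-recursion jointly: the $\tilde{p}$-subsystem contracts at rate $\zeta^2<1$ while the $\tilde{x}$-subsystem contracts through the Schur-stable $F$, so bounding the cross term by Young's inequality and absorbing the residuals into the two decay margins (a composite-Lyapunov / small-gain estimate) yields joint convergence $(\tilde{p}_t,\tilde{x}_t)\to 0$. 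The delicate points are thus establishing the signal boundedness underlying $\phi_t\in\mathcal{L}_\infty$ and disentangling this coupling; the remainder is standard exponentially-weighted-RLS bookkeeping in the spirit of \cite{suzuki1980discrete}, \cite{goodwin2014adaptive}.
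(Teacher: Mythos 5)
Your reduction to parameter convergence, the information-form rewriting $\Gamma_t^{-1}=\zeta^2\Gamma_{t-1}^{-1}+\phi_{t-1}\phi_{t-1}^T$, the two-sided bounds on $\Gamma_t^{-1}$ under the stated excitation hypothesis, and the exact dissipation identity $V_t=\zeta^2V_{t-1}-(1-\beta_t)\epsilon_t^2$ in the decoupled case are all correct (note that the paper itself gives no proof at all---it cites Sec.~II of \cite{kudva1974discrete} verbatim---so any self-contained argument is necessarily a different route). The genuine gap is that your proof stops exactly where the adaptive-observer problem stops being standard RLS. Keeping the coupling $w_t\triangleq cF\tilde{x}_{t-1}$, the exact identity is $V_t=\zeta^2V_{t-1}-(1-\beta_t)\epsilon_t^2-2(1-\beta_t)\epsilon_t w_t+\beta_t w_t^2$, and Young's inequality only gives $V_t\le \zeta^2V_{t-1}+O(1)\,w_t^2$. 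Chaining this with $\|\tilde{p}_t\|_2^2\le V_t/c_1$ and with the estimate $\limsup_t\|\tilde{x}_t\|_2\le \tfrac{C_F B_{\max}}{1-\rho}\limsup_t\|\tilde{p}_t\|_2$ obtained from \eqref{xtilde} (where $\|F^j\|_2\le C_F\rho^j$, $\rho<1$, and $B_{\max}\triangleq\sup_t\|[I_ny_t\;\,I_nu_t]\|_2$) yields only $\limsup_t\|\tilde{p}_t\|_2^2\le G\,\limsup_t\|\tilde{p}_t\|_2^2$ with loop gain $G=O\bigl(\|cF\|_2^2\,C_F^2B_{\max}^2\,/\,[c_1(1-\zeta^2)(1-\rho)^2]\bigr)$. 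None of these constants is at your disposal: $c_1$ is set by the excitation level, $\zeta$ and $F$ are fixed design parameters, and $B_{\max}$ is set by $\mathcal{X}$, $\mathcal{U}$. Nothing guarantees $G<1$, and when $G\ge 1$ the inequality is vacuous; the same obstruction reappears if one tries a weighted composite Lyapunov function $V_t+\kappa\|\tilde{x}_{t-1}\|_M^2$, since the required condition is again a product-of-gains bound. So ``absorbing the residuals into the two decay margins'' is an assertion, not a proof.

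The lemma is nonetheless true, but for a structural reason that a generic two-subsystem small-gain estimate throws away: $\tilde{x}_{t-1}$ is not an exogenous disturbance with independent gain, it is a filtered copy of the \emph{same} parameter error, $\tilde{x}_{t-1}=F^{t-1}\tilde{x}_0+\sum_{j=0}^{t-2}F^{t-2-j}[I_ny_j\;\,I_nu_j]\tilde{p}_{j+1}$. Unrolling this inside the innovation shows $e_t=\bar{\phi}_{t-1}^T\tilde{p}_{t-1}+(\text{terms in the increments }\hat{p}_{k}-\hat{p}_{k-1})+(\text{exponentially decaying initial-condition terms})$, where $\bar{\phi}_{t-1}$ is a measurable \emph{extended} regressor built from $F$-filtered input--output data; since your own $V$-identity makes the increments square-summable, convergence follows from persistent excitation of this filtered regression, which is precisely how \cite{kudva1974discrete} and \cite{suzuki1980discrete} argue. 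This also matters for the hypothesis itself: as defined in the paper, $\phi_t$ has only two nonzero entries ($y_t$ and $u_t$ in positions $1$ and $n{+}1$), so excitation must anyway be understood through the filtered regressor rather than $\phi_t$ alone---another sign that the missing swapping step, not a norm bound, is the load-bearing idea. A secondary point: invoking \eqref{hc1} for boundedness of $\phi_t$ presupposes closed-loop constraint satisfaction, which the paper establishes only later (Theorem~\ref{theo1}, Corollary~\ref{cor2}); it should be a standing boundedness hypothesis rather than a consequence of the constraints.
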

\begin{proof}In \cite[sec.~II]{kudva1974discrete}, which is directly applicable here since the regressor $\phi_t$ is measurable. \end{proof}
\begin{remark}\label{ronuSR}
For an LTI system, the persistent excitation condition is guaranteed by choosing the input $u_t$ to be sufficiently rich \cite{boyd1986necessary}.
\end{remark}

The parameter estimates $\hat{\psi}_t$ obtained using \eqref{rec01} and \eqref{rec02} may lie outside $\Psi$. Since it is desirable that $\hat{\psi}_t\in\Psi$ (Assumption \ref{ABa}), the adaptation law in \eqref{rec01} is replaced by the following two equations
\begin{align}
 & \bar{p}_{t}\triangleq\hat{p}_{t-1}+\Gamma_{t}\phi_{t-1}(y_{t}-cF\hat{x}_{t-1}-\phi_{t-1}^T\hat{p}_{t-1})  \label{rec1}\\
 &   \hat{p}_{t}=\begin{cases}\bar{p}_{t}\text{, if }\hat{\psi}_{t}\in\Psi  \label{Proj}\\
    \text{Proj}_{\Pi}(\bar{p}_{t}) \text{, otherwise}
    \end{cases}
\end{align}
where \textcolor{black}{$\text{Proj}_{\Pi}(\bar{p}_{t})\triangleq \text{arg} \min_{\rho\in \Pi} ||\bar{p}_t-\rho||_2$}, and the set $\Pi\triangleq\big{\{}\begin{bmatrix}(\hat{a}-f)^T&\hat{b}^T\end{bmatrix}^T|\;\hat{a},\hat{b}\in\mathbb{R}^{n}\text{, }\scriptsize{\left[
    \begin{array}{c|c|c}
    \hat{a}&\begin{array}{cc}
         I_{n-1}\\0_{1\times(n-1)}
    \end{array}&\hat{b}
    \end{array}
    \right]}\in\Psi \Big{\}}$ is convex with known $L$ vertices corresponding to those of $\Psi$. 

\begin{remark}\label{L1}
Lemma \ref{L01} is applicable to the projection modified recursive updates of $\hat{p}_t$ given by \eqref{rec02}-\eqref{Proj} \cite[ch.~4]{ioannou2006adaptive}.
\end{remark}
\section{Adaptive Output Feedback MPC}
In absence of state measurements, the COCP $\mathbb{OP}1$ is reformulated in terms of the state estimates. This implies that we need to find a constraint set for $\hat{x}_t$, so that $x_t\in\mathcal{X}$. Since $\hat{x}_t=x_t-\tilde{x}_t$, it is possible to obtain a constraint set $\widehat{\mathcal{X}}$ for $\hat{x}_t$ by tightening $\mathcal{X}$ using an invariant set for $\tilde{x}_t$ \cite{rakovic2005invariant}. 
\subsection{An invariant set for the state estimation errors}
 From \eqref{AOsys1}, we can write the adaptive observer dynamics as
\begin{equation}\label{AOdy}
    \hat{x}_{t+1}=F\hat{x}_t+(\hat{A}_{t\color{black}{+1}}-F){x}_t+\hat{b}_{t\color{black}{+1}}u_t.
\end{equation}
Subtracting (\ref{AOdy}) from (\ref{osys1r}), we get
\begin{equation}\label{OSEdy}
    \tilde{x}_{t+1}=F\tilde{x}_t+\tilde{\psi}_{t\color{black}{+1}}Z_t
\end{equation}
where $\tilde{\psi}_t\triangleq\psi-\hat{\psi}_t\in\mathbb{R}^{n\times (n+1)}$ and $Z_t\triangleq\begin{bmatrix}x_t^T&u_t\end{bmatrix}^T\in\mathbb{R}^{n+1}$. Next, we make a standard assumption regarding the uncertainty in initial state estimate \cite{mayne2006robust}, \cite{kogel2017robust}.
\begin{assu}\label{Wa}
There exists a known, convex and compact set $\mathbb{W}_0$ containing $\tilde{x}_0$ and the origin,  \textcolor{black}{that satisfies $Fw\in\mathbb{W}_0$ $\forall w\in\mathbb{W}_0$ (positive invariance)}, and $\mathbb{W}_0\subset \mathcal{X}$.
\end{assu}
Assumption \ref{Wa} is required to characterize a bounded set for $\tilde{x}_t$ $\forall t\in\mathbb{I}_0^\infty$. Using it and (\ref{OSEdy}), we can write
\begin{equation}\label{Infsum}
   \tilde{x}_0\in\mathbb{W}_0\text{ and } \tilde{x}_i\in\mathbb{W}_i\triangleq F^i \mathbb{W}_0 \oplus \bigoplus_{k=0}^{i-1} F^k \Omega_1 \hspace{0.4cm}\forall i\in\mathbb{I}_1^\infty
\end{equation}
where \textcolor{black}{$\Omega_1 \triangleq\{\bar{\psi}Z\;|\; \bar{\psi}\in\widetilde{\Psi} \text{; }Z=[\;x^T\;u\;]^T$ $\text{with }x\in\mathcal{X},\;u\in\mathcal{U}\}$} and \textcolor{black}{$\widetilde{\Psi}\triangleq conv(\{\psi^{vi}-\psi^{vj}\;|\;i\neq j\; \forall i,j\in\mathbb{I}_1^L\})$}. Let $\Omega_\infty\triangleq \lim_{i\rightarrow\infty} \bigoplus_{k=0}^{i-1} F^k \Omega_1$. Since $\Omega_1$ is a compact, convex set containing the origin, we can find a minimal robust positively invariant (RPI) set $\Omega_\infty$ that satisfies $\Omega_\infty=F\Omega_\infty\oplus\Omega_1$ \cite{kogel2017robust}. Computing a \textcolor{black}{RPI} outer approximation \cite{rakovic2005invariant} of $\Omega_\infty$ is more tractable; let $\Omega_{oa}$ be that outer approximation. Using $\Omega_{oa}$ and the fact that $\mathbb{W}_0\supseteq F\mathbb{W}_0 \supseteq F^2\mathbb{W}_0\supseteq...$ (follows from Assu-\newpage\noindent mption \ref{Wa}), we get
\begin{align}\label{xtinW0}
    \tilde{x}_t\in\mathbb{W}_{oa}\triangleq \mathbb{W}_0\oplus \Omega_{oa}\hspace{0.7cm}\forall t\in\mathbb{I}_0^\infty.
\end{align}
 
Finally, using $\mathbb{W}_{oa}$ in which the state estimation error $\tilde{x}_t$ evolves, the tightened constraint set for the state estimate $\hat{x}_t$ is obtained as \begin{equation}\label{hatXtc}
x_t-\tilde{x}_t=\hat{x}_t\in\widehat{\mathcal{X}}\triangleq\mathcal{X}\ominus\mathbb{W}_{oa}.
\end{equation}

\begin{remark}\textcolor{black}{It is implied from \eqref{hatXtc} that smaller the size of $\mathbb{W}_{oa}$, the larger the feasible region for the COCP.}\end{remark}
%

The initial constraint tightening in \eqref{hatXtc} allows us to reformulate the COCP in terms of the state estimate. Adding back $\mathbb{W}_{oa}$ to the state estimate trajectory, obtained from the subsequently reformulated COCP, results in a tube \cite{langson2004robust} for the true state trajectory.

The observer dynamics in \eqref{AOsys1}, in addition to generating state estimates, is exploited for computing state estimate predictions that are used in the MPC optimization routine. To that end, the dynamics in \eqref{AOdy}, which is an equivalent form of \eqref{AOsys1}, is rewritten as
\begin{equation}\label{epsil1}
    \hat{x}_{t+i+1}=F\hat{x}_{t+i}+(\hat{A}_{t+i\color{black}{+1}}-F){x}_{t+i}+\hat{b}_{t+i\color{black}{+1}}u_{t+i}\hspace{0.1cm}\forall i\in\mathbb{I}_0^\infty.
\end{equation}
%
However, in its current form, \eqref{epsil1} is not usable for computing state estimate predictions due to the presence of $\hat{A}_{t+i\color{black}{+1}}$, $\hat{b}_{t+i\color{black}{+1}}$ and $x_{t+i}$ $\forall i\in\mathbb{I}_0^\infty$, which are unavailable at current time $t$. Hence, \eqref{epsil1} is rewritten as
\begin{equation}\label{epsil2}
    \hat{x}_{t+i+1}=\hat{A}_t\hat{x}_{t+i}+\hat{b}_tu_{t+i}+\varepsilon_{t+i}\hspace{0.5cm}\forall i\in\mathbb{I}_0^\infty
\end{equation}
where $\varepsilon_{t+i}\triangleq (\hat{A}_{t+i\color{black}{+1}}-F)\tilde{x}_{t+i}+(\hat{\psi}_{t+i\color{black}{+1}}-\hat{\psi}_t)\hat{Z}_{t+i}\in\mathbb{R}^n$, with $\hat{Z}_{t+i}\triangleq\begin{bmatrix}\hat{x}_{t+i}^T&u_{t+i}\end{bmatrix}^T\in\mathbb{R}^{n+1}$, is coined as the prediction uncertainty. All the terms involving the unavailable quantities at time $t$ in \eqref{epsil1} are lumped together in $\varepsilon_{t+i}$. This prediction uncertainty is handled using a homothetic tube \cite{langson2004robust}, as detailed in the subsequent subsections.

\subsection{Sets for the prediction uncertainties}
 By definition, $\varepsilon_{t+i}=(\hat{A}_{t+i\color{black}{+1}}-F)\tilde{x}_{t+i}+(\hat{\psi}_{t+i\color{black}{+1}}-\hat{\psi}_t)\hat{Z}_{t+i}$ $=(\hat{A}_{t+i\color{black}{+1}}-F)\tilde{x}_{t+i}+\sum_{k=t}^{\color{black}{t+i}}(\hat{\psi}_{k+1}-\hat{\psi}_k)\hat{Z}_{t+i}$ $\forall i\in\mathbb{I}_0^\infty$.
Thus,
\begin{equation}\label{vareS}
    \varepsilon_{t+i}\in\Omega_2\oplus i\Omega_3\hspace{0.7cm}\forall i\in\mathbb{I}_0^{\infty}
\end{equation}
where $\Omega_2\triangleq\{(\hat{A}-F)\tilde{x}\;|\;\hat{A}\in\Psi_A\text{, }\tilde{x}\in\mathbb{W}_{oa}\}\textcolor{black}{\oplus\;\Omega_3}$, $\Psi_A\triangleq$ $\{ \hat{A}\;|\begin{bmatrix} \hat{A}&\hat{b}\end{bmatrix}\in\Psi\text{ with }\hat{b}\in\mathbb{R}^{n}\}$ and $\Omega_3\triangleq \{\bar{\psi}\hat{Z}\;|\;\bar{\psi}\in\widetilde{\Psi}\text{; }\hat{Z}=[\;\hat{x}^T\;u\;]^T\text{ with }\hat{x}\in\widehat{\mathcal{X}}\text{, }u\in\mathcal{U}\}$. 

\subsection{Tubes for state estimate and control input}
 The optimization routine provides a state estimate tube $\mathbf{S}_t\triangleq\{S_{0|t},S_{1|t},$ $...,S_{N|t}\}$ and a control tube $\mathbf{U}_t\triangleq\{U_{0|t},U_{1|t},$ $...,U_{N-1|t}\}$ at each $t$. The state estimate tube sections are designed as \cite{langson2004robust}
\begin{equation}\label{st}
    S_{i|t}=\textcolor{black}{\{\beta_{i|t}\}\oplus}\;\alpha_{i|t}D\subseteq\widehat{\mathcal{X}}\hspace{0.7cm}\forall i\in\mathbb{I}_{0}^{N}
\end{equation}
where $D=conv(\{d^{v1},d^{v2},...,d^{vj}\})$ \big{[}$(\cdot)^{vk}$ denotes $k^{\text{th}}$ vertex\big{]} is a convex polytope containing the origin, and $\{\beta_{i|t}\}$ and $\{\alpha_{i|t}\}$ are the sequences of centers and scaling factors, respectively, for the tube sections in $\mathbf{S}_t$. From \eqref{st}, we express, $S_{i|t}=conv(\{s_{i|t}^{v1},s_{i|t}^{v2},...,s_{i|t}^{vj}\})$ with $s_{i|t}^{vk}=\beta_{i|t}+\alpha_{i|t}d^{vk}\text{ }\forall (i,k)\in$ $\mathbb{I}_{0}^{N}\times\mathbb{I}_1^j$. And the control tube sections are given by \cite{langson2004robust}\newpage\vspace*{-0.2cm}\noindent
\begin{equation}\label{ct}
   U_{i|t}=\{u_{i|t}^{v1},u_{i|t}^{v2},...,u_{i|t}^{vj}\}\subseteq\mathcal{U}\hspace{0.7cm}\forall i\in\mathbb{I}_{0}^{N-1}
\end{equation}
 where each $u_{i|t}^{vk}$ is linked to $s_{i|t}^{vk}\text{ }\forall (i,k)\in\mathbb{I}_{0}^{N-1}\times\mathbb{I}_1^j$, and is required to satisfy the inclusion given later in (\ref{cons6m}).
\subsection{Characterization of the Terminal Set}
For the AOFMPC, we make the following assumption.
\begin{assu}\label{Ka}
There exists a pair $(P,K)$, where $P\in\mathbb{R}^{n\times n}$ and $K^T\in\mathbb{R}^n$, that satisfies $P~\succ~0 \text{ and }-(\hat{A}+\hat{b}K)^TP(\hat{A}+\hat{b}K)+P-{\textcolor{black}{(Q+rK^TK)}}\succ 0$ $\forall\text{ }\begin{bmatrix}\hat{A}&\hat{b}\end{bmatrix}\in\Psi$.
In addition, $\exists$ a $\lambda$-contractive set $\widehat{\mathcal{X}}_T\triangleq \xi D\subseteq\widehat{\mathcal{X}}$ where $\xi\in\mathbb{R}$ such that \textcolor{black}{$(\hat{A}+\hat{b}K)\hat{x}\in\lambda\widehat{\mathcal{X}}_T$} $\forall (\hat{x}, K\hat{x})\in\widehat{\mathcal{X}}_T\times\mathcal{U} \text{ with } \lambda\in(0,1)$. 
\end{assu}

The terminal set for AOFMPC is  $\widehat{\mathcal{X}}_T$, inside which control input $u_t=K\hat{x}_t$ $\forall \hat{x}_t\in\widehat{\mathcal{X}}_T$. The constant $\lambda$ is chosen such that
\begin{equation}\label{eq29}
    \lambda\widehat{\mathcal{X}}_T\subseteq\widehat{\mathcal{X}}_T\ominus\{\Omega_2\oplus (N-1)\Omega_3\} \text{ (from \eqref{vareS})}.
\end{equation}
Assumption \ref{Ka} is standard in the context of adaptive MPC \cite{dhar2021indirect}. A set $\widehat{\mathcal{X}}_T$ satisfying \eqref{eq29} is computable following \cite{darup2016computation}.
\subsection{Reformulated COCP for AOFMPC}
 Let $\theta_t\triangleq\{\{\alpha_{i|t}\},\{\beta_{i|t}\},\mathbf{U}_{t}\}$ $\forall i\in\mathbb{I}_0^N$. The reformulated COCP for AOFMPC is given by
\begin{align}
  &\mathbb{OP}2:\min_{\theta_t}\;J(\hat{x}_t,\theta_t) \;\;\text{where}\nonumber\\
  &J(\hat{x}_t,\theta_t)\triangleq \sum_{i=0}^{N-1}\sum_{k=1}^j\Big{\{}||s^{vk}_{i|t}||^2_Q+r(u^{{vk}}_{i|t})^2\Big{\}}
  +\sum_{k=1}^{j}||s^{vk}_{N|t}||^2_P \label{MPC2}
\end{align} \vspace{-0.4cm}
\begin{align}
    &\text{subject to }\hspace{0.2cm}\beta_{0|t}=\hat{x}_t,\hspace{0.1cm}\alpha_{0|t}=0\text{ and }\alpha_{i|t}\geq0\text{ }\hspace{0.2cm}\forall i\in\mathbb{I}_{1}^{N}\tag{\ref{MPC2}a}\label{cons1m}\\
  &\hspace{2.5cm} S_{i|t}\subseteq\widehat{\mathcal{X}},\text{ }U_{i|t}\subseteq\mathcal{U}\hspace{0.3cm}\forall i\in\mathbb{I}_{0}^{N-1}\tag{\ref{MPC2}b}\label{cons2m}\\
   &\hspace{3.2cm} S_{N|t}\subseteq\widehat{\mathcal{X}}_T\subset\widehat{\mathcal{X}}\tag{\ref{MPC2}c}\label{cons3m}\\
    &\hspace{-0.162cm}\hat{A}_ts_{i|t}^{vk}+\hat{b}_tu_{i|t}^{vk}\in S_{i+1|t}\ominus\{\Omega_2\oplus i\Omega_3\}\hspace{0.1cm}\forall (i,k)\in\mathbb{I}_{0}^{N-1}\times\mathbb{I}_1^j\tag{\ref{MPC2}d}\label{cons6m}
\end{align}
where $P$ and $\widehat{\mathcal{X}}_T$ are defined following Assumption \ref{Ka} and \eqref{eq29}. From \eqref{cons1m}, $S_{0|t}$ and $U_{0|t}$ are singleton sets with respective elements $\hat{x}_t$ and $u_{0|t}\triangleq u_{0|t}^{v1}=u_{0|t}^{v2}=...=u_{0|t}^{vj}$. The control input applied to \eqref{osys1} and \eqref{AOdy} at time $t$ is $u_t= u_{0|t}$. The inclusion \eqref{cons6m} obtained from \eqref{epsil1}-\eqref{vareS} ensures $\hat{x}_{t+i}\in S_{i|t}$ $\forall (t,i)\in\mathbb{I}_0^\infty \times \mathbb{I}_0^N$. The complete homothetic tube containing the state estimate trajectory is constructed using $S_{0|0}\ni \hat{x}_0$ and $S_{1|t}\ni \hat{x}_{t+1}$ $\forall t\in\mathbb{I}_0^\infty$. Its pictorial representation is given in Fig. \ref{Introp} as the yellow \textit{inner tube}, with the state estimate trajectory in red. Adding the invariant set $\mathbb{W}_{oa}$ to the tube sections of the homothetic tube results in a larger \textit{outer tube} (Fig. \ref{Introp}). Since the state estimation errors belong to $\mathbb{W}_{oa}$, the actual state trajectory is guaranteed to lie inside the outer tube. Algorithm $1$ provides the steps to implement AOFMPC in a receding horizon fashion \cite{kouvaritakis2016model}.
\begin{remark}\textcolor{black}{
The size of the set for $\varepsilon_{t+i}$ given in \eqref{vareS} increases linearly with $i$, implying that the feasible region in $\mathcal{X}$ reduces with increasing horizon, as seen from \eqref{eq29} and \eqref{cons6m}. It is later shown in \ref{subF} that the horizon-dependent bound is crucial for proving recursive feasibility, albeit at the cost of increased conservatism.}
\end{remark}

\begin{figure}[t!]
 \vspace{0.23cm} \centering
     \framebox{\parbox{3in}{\includegraphics[scale=0.391]{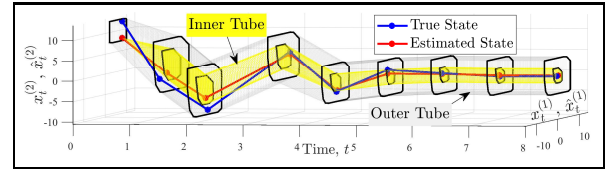}}}
     \caption{Pictorial representation of the tubes' structure for a $2^\text{nd}$ order SISO LTI system. Superscripts on $x_t$ and $\hat{x}_t$ denote the respective components.}
      \label{Introp}
\end{figure}
\begin{algorithm}
\caption{AOFMPC}\label{alg}
\begin{algorithmic}
\Offline Compute $\mathbb{W}_{oa}$ from \eqref{xtinW0}, $\widehat{\mathcal{X}}$ from \eqref{hatXtc}, $\Omega_2$, $\Omega_3$ from \eqref{vareS} and, $(P,K)$ for $\Psi$ and $\widehat{\mathcal{X}}_T$ from Assumption \ref{Ka} and \eqref{eq29}.\;
\Online{ 
\For{$t\geq 0$}\\
\begin{itemize}
    \item Measure $y_t$ from the plant \eqref{osys1}.
    \item Run $\mathbb{OP}2$ with $\hat{A}_t$, $\hat{b}_t$ and $\hat{x}_t$ to get $u_t$.
    \item Apply $u_t$ and $y_t$ to the observer \eqref{AOdy} to get $\hat{A}_{t+1}$, $\hat{b}_{t+1}$ and $\hat{x}_{t+1}$. Simultaneously, apply $u_t$ to the plant \eqref{osys1}.
    \item Update $t \gets t+1$.
\end{itemize} \EndFor}
\end{algorithmic}
\end{algorithm}
\begin{remark} The proposed AOFMPC allows imposition of user-defined bounds on the internal states, \textcolor{black}{which is not straightforward using the methods in \cite{coulson2019data}-\cite{nguyen2020output}, \cite{heirung2017dual}. Also, the adaptation in AOFMPC leads to parameter learning and} \newpage\vspace*{-0.6cm}\noindent{\textcolor{black}{\noindent therefore, reduced conservatism as compared to the non-adaptive approaches in \cite{mayne2006robust}, \cite{kogel2017robust}. In addition, the modularity of the adaptive observer design acts as a stepping stone for exploring other variants of robust MPC that do not rely on polytopic tubes, to further reduce the conservatism.}}
\end{remark}
\subsection{Recursive Feasibility and Boundedness}\label{subF}
 We make the following claims for $\mathbb{OP}2$.
\begin{lemma}\label{Tif1}
Suppose ${x}_0\in{\mathcal{X}}$, $\hat{x}_0\in\widehat{\mathcal{X}}$and $\mathbb{OP}2$ is initially feasible (i.e., solution of $\mathbb{OP}2$ exists at $t=0$) resulting in control $\mu_0$. Then, $\varepsilon_{i}\in\Omega_2\oplus i\Omega_3\hspace{0.2cm}\forall i\in\mathbb{I}_0^{N-1}$, and $\tilde{x}_{i}\in\mathbb{W}_{oa}$ guaranteeing $x_{i}\in\mathcal{X}\hspace{0.2cm}\forall i\in\mathbb{I}_1^N$.
\end{lemma}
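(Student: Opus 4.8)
The plan is to establish all the asserted inclusions through a single induction on the horizon index $i$, because the four quantities $\hat{x}_i\in S_{i|0}$, $\varepsilon_i$, $\tilde{x}_i$ and $x_i$ are circularly coupled and cannot be bounded one at a time. Concretely, I would carry as a joint induction hypothesis, for each $i$, the statements $\hat{x}_i\in S_{i|0}\subseteq\widehat{\mathcal{X}}$, $\tilde{x}_i\in\mathbb{W}_{oa}$ and $x_i\in\mathcal{X}$ (and, for $i\leq N-1$, $u_i\in\mathcal{U}$). The control actually fed to the plant and observer is the implicit vertex-interpolation law of homothetic tube MPC: writing $\hat{x}_i=\sum_k\lambda_k s_{i|0}^{vk}$ as a convex combination of the vertices of $S_{i|0}$, one applies $u_i=\sum_k\lambda_k u_{i|0}^{vk}$; this must be stated explicitly, as it is what links \eqref{cons6m} to the realized trajectory. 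The base case $i=0$ is immediate: \eqref{cons1m} makes $S_{0|0}=\{\hat{x}_0\}$ and $U_{0|0}=\{u_{0|0}\}$ singletons, $\hat{x}_0\in\widehat{\mathcal{X}}$ and $x_0\in\mathcal{X}$ are hypotheses, and $\tilde{x}_0\in\mathbb{W}_0\subseteq\mathbb{W}_{oa}$ by Assumption~\ref{Wa} and \eqref{xtinW0}.

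For the inductive step I would proceed in the order (a)--(e). (a) From the vertex law, $u_i\in conv(U_{i|0})\subseteq\mathcal{U}$ by \eqref{ct}, \eqref{cons2m} and convexity of $\mathcal{U}$. (b) The bound $\varepsilon_i\in\Omega_2\oplus i\Omega_3$ follows from the telescoping identity $\hat{\psi}_{i+1}-\hat{\psi}_0=\sum_{k=0}^{i}(\hat{\psi}_{k+1}-\hat{\psi}_k)$ appearing in the definition of $\varepsilon_i$: each increment $\hat{\psi}_{k+1}-\hat{\psi}_k$ lies in $\widetilde{\Psi}$ because the projection \eqref{Proj} keeps every $\hat{\psi}_t\in\Psi$, and $\hat{Z}_i=[\hat{x}_i^T\ u_i]^T$ has $\hat{x}_i\in\widehat{\mathcal{X}}$ (hypothesis) and $u_i\in\mathcal{U}$ (step (a)), so each summand lies in $\Omega_3$; one of the $i+1$ copies of $\Omega_3$ is absorbed, together with $(\hat{A}_{i+1}-F)\tilde{x}_i$ (where $\hat{A}_{i+1}\in\Psi_A$ by projection and $\tilde{x}_i\in\mathbb{W}_{oa}$ by hypothesis), into $\Omega_2$, leaving $i\Omega_3$.

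(c) Using \eqref{epsil2} with $t=0$ and the convex combination of \eqref{cons6m} over the vertices, $\hat{x}_{i+1}=\sum_k\lambda_k(\hat{A}_0 s_{i|0}^{vk}+\hat{b}_0 u_{i|0}^{vk})+\varepsilon_i$ lies in $(S_{i+1|0}\ominus\{\Omega_2\oplus i\Omega_3\})\oplus(\Omega_2\oplus i\Omega_3)\subseteq S_{i+1|0}$, where convexity of $S_{i+1|0}\ominus\{\Omega_2\oplus i\Omega_3\}$ justifies pulling the convex weights inside; then \eqref{cons2m}--\eqref{cons3m} give $S_{i+1|0}\subseteq\widehat{\mathcal{X}}$. (d) From the error recursion \eqref{OSEdy}, $\tilde{x}_{i+1}=F\tilde{x}_i+\tilde{\psi}_{i+1}Z_i$ with $\tilde{\psi}_{i+1}\in\widetilde{\Psi}$ and $Z_i=[x_i^T\ u_i]^T$, $x_i\in\mathcal{X}$, $u_i\in\mathcal{U}$; unrolling as in \eqref{Infsum} and using the nesting $F^{i+1}\mathbb{W}_0\subseteq\mathbb{W}_0$ together with $\bigoplus_{k=0}^{i}F^k\Omega_1\subseteq\Omega_{oa}$ yields $\tilde{x}_{i+1}\in\mathbb{W}_{oa}$, exactly \eqref{xtinW0}. (e) Finally $x_{i+1}=\hat{x}_{i+1}+\tilde{x}_{i+1}\in\widehat{\mathcal{X}}\oplus\mathbb{W}_{oa}=(\mathcal{X}\ominus\mathbb{W}_{oa})\oplus\mathbb{W}_{oa}\subseteq\mathcal{X}$ by \eqref{hatXtc}, closing the induction.

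I expect the main obstacle to be step (c)---showing $\hat{x}_{i+1}\in S_{i+1|0}$---together with correctly untangling the circular dependence described above. The delicate points are that the realized state-estimate increment $\varepsilon_i$ must be shown to live in precisely the set $\Omega_2\oplus i\Omega_3$ for which \eqref{cons6m} reserves room, and that the implicit vertex control law is what converts the vertex-wise inclusion \eqref{cons6m} into an inclusion for the actual $\hat{x}_{i+1}$; both hinge on convexity of the tightened tube sections and on the projection \eqref{Proj} keeping all parameter estimates inside $\Psi$ so that the increments remain in $\widetilde{\Psi}$.
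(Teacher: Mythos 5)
Your proposal is correct and follows essentially the same route as the paper's proof: the paper also propagates the chain $\tilde{x}_i\in\mathbb{W}_{oa}\Rightarrow\varepsilon_i\in\Omega_2\oplus i\Omega_3\Rightarrow\hat{x}_{i+1}\in S_{i+1|0}$ (via \eqref{cons6m}) $\Rightarrow x_{i+1}=\hat{x}_{i+1}+\tilde{x}_{i+1}\in\widehat{\mathcal{X}}\oplus\mathbb{W}_{oa}\subseteq\mathcal{X}$, doing the step $i=0\to 1$ explicitly and invoking ``similar steps'' for the rest. Your version merely makes explicit what the paper leaves implicit (the joint induction, the vertex-interpolation control law, the telescoping of parameter increments through the projection \eqref{Proj}, and the convexity arguments), which is a faithful and somewhat more careful rendering of the same argument.
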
 
\begin{proof}
From \eqref{xtinW0}, $\tilde{x}_0=x_{0}-\hat{x}_{0}\in\mathcal{X}-\hat{\mathcal{X}}\Rightarrow \tilde{x}_0\in\mathbb{W}_{oa}$.\\ $\therefore$ $\varepsilon_{0}=(\hat{A}_1-F)\tilde{x}_{0}\textcolor{black}{+(\hat{\psi}_1-\hat{\psi}_0)\hat{Z}_0}\in\Omega_2$ (by definition of $\Omega_2$). Further, \eqref{xtinW0} and \eqref{cons6m} guarantee that $\tilde{x}_{1}\in\mathbb{W}_{oa}$ and  $\hat{x}_1\in S_{1|0}\subseteq\widehat{\mathcal{X}}$, respectively.

$\therefore\text{ }x_{1}=\hat{x}_{1}+\tilde{x}_{1}\in\widehat{\mathcal{X}}\oplus\mathbb{W}_{oa}\Rightarrow x_1\in\mathcal{X}\text{ (from \eqref{hatXtc})}.$

Using similar steps as done for $\varepsilon_{0}$, $\tilde{x}_{1}$ and $x_{1}$, it can be proved that $\varepsilon_{i}\in\Omega_2\oplus i\Omega_3\hspace{0.2cm}\forall i\in\mathbb{I}_1^{N-1}$ and $\tilde{x}_{i}\in\mathbb{W}_{oa}$ guaranteeing $x_{i}\in\mathcal{X}$ $\forall i\in\mathbb{I}_2^N$.
\end{proof}

\begin{coro}\label{cor1}
If $\forall t\in\mathbb{I}_0^\infty$, ${x}_t\in{\mathcal{X}}$, $\hat{x}_t\in\widehat{\mathcal{X}}$ and $\mathbb{OP}2$ is feasible at time $t$ resulting in control $\mu_t$, then, $\varepsilon_{t+i}\in\Omega_2\oplus i\Omega_3$ $\forall i\in\mathbb{I}_0^{N-1}$, and $\tilde{x}_{t+i}\in\mathbb{W}_{oa}$ guaranteeing $x_{t+i}\in\mathcal{X}$ $\forall i\in\mathbb{I}_{1}^{N}$.
\end{coro}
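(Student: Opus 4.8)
The plan is to recognize that Corollary \ref{cor1} is simply the time-shifted counterpart of Lemma \ref{Tif1}: the inductive argument that established the claim at $t=0$ transfers verbatim with the time origin relocated to an arbitrary $t\in\mathbb{I}_0^\infty$, because none of its ingredients — the unconditional inclusion \eqref{xtinW0}, the definitions of $\Omega_2$ and $\Omega_3$, and the feasibility of \eqref{cons6m} — is specific to $t=0$. The hypothesis supplies exactly what is needed at the shifted origin, namely $x_t\in\mathcal{X}$, $\hat{x}_t\in\widehat{\mathcal{X}}$, and an optimizer $\theta_t$ of $\mathbb{OP}2$ whose components satisfy \eqref{cons1m}--\eqref{cons6m}.

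First I would establish the base step. Since \eqref{xtinW0} holds for every time index, $\tilde{x}_t\in\mathbb{W}_{oa}$; combining this with $\hat{A}_{t+1}\in\Psi_A$ (guaranteed by the projection \eqref{Proj} keeping $\hat{\psi}_{t+1}\in\Psi$), with $(\hat{\psi}_{t+1}-\hat{\psi}_t)\hat{Z}_t\in\Omega_3$, and with the fact that $\Omega_2$ already absorbs one copy of $\Omega_3$, the definition of $\varepsilon_{t+i}$ gives $\varepsilon_t\in\Omega_2$. Feeding $\varepsilon_t\in\Omega_2$ and the constraint \eqref{cons6m} at $i=0$ (where the $i\Omega_3$ term vanishes) into the recursion \eqref{epsil2} yields $\hat{x}_{t+1}\in(S_{1|t}\ominus\Omega_2)\oplus\Omega_2\subseteq S_{1|t}\subseteq\widehat{\mathcal{X}}$. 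Invoking \eqref{xtinW0} again for $\tilde{x}_{t+1}\in\mathbb{W}_{oa}$ and then \eqref{hatXtc} closes the step: $x_{t+1}=\hat{x}_{t+1}+\tilde{x}_{t+1}\in\widehat{\mathcal{X}}\oplus\mathbb{W}_{oa}\subseteq\mathcal{X}$.

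The inductive step repeats this pattern for $i\in\mathbb{I}_0^{N-1}$. Assuming $\hat{x}_{t+i}\in S_{i|t}\subseteq\widehat{\mathcal{X}}$ and, from \eqref{cons2m}, $u_{t+i}\in U_{i|t}\subseteq\mathcal{U}$, I would bound $\varepsilon_{t+i}$ through its telescoped form $(\hat{A}_{t+i+1}-F)\tilde{x}_{t+i}+\sum_{k=t}^{t+i}(\hat{\psi}_{k+1}-\hat{\psi}_k)\hat{Z}_{t+i}$: the first term lies in $\{(\hat{A}-F)\tilde{x}\mid\hat{A}\in\Psi_A,\;\tilde{x}\in\mathbb{W}_{oa}\}$, while each of the $i+1$ increments $(\hat{\psi}_{k+1}-\hat{\psi}_k)\hat{Z}_{t+i}$ lies in $\Omega_3$ because consecutive projected estimates differ by an element of $\widetilde{\Psi}$ and $\hat{Z}_{t+i}$ has admissible components. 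Regrouping one increment into $\Omega_2$ leaves $\varepsilon_{t+i}\in\Omega_2\oplus i\Omega_3$; the vertex-wise inclusion \eqref{cons6m} at stage $i$, propagated to the realized point $\hat{x}_{t+i}\in S_{i|t}$ by convexity of the tube sections and the interpolating control law, combined with \eqref{epsil2} then gives $\hat{x}_{t+i+1}\in S_{i+1|t}\subseteq\widehat{\mathcal{X}}$, whereupon \eqref{xtinW0} and \eqref{hatXtc} deliver $x_{t+i+1}\in\mathcal{X}$.

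The only point demanding genuine care — the step I expect to be the main obstacle — is confirming that the bound $\varepsilon_{t+i}\in\Omega_2\oplus i\Omega_3$ is independent of the shifted origin $t$, since this is precisely where a hidden $t$-dependence could intrude. It cancels: the telescoping sum $\sum_{k=t}^{t+i}(\hat{\psi}_{k+1}-\hat{\psi}_k)=\hat{\psi}_{t+i+1}-\hat{\psi}_t$ always contains exactly $i+1$ increments irrespective of $t$, the projection \eqref{Proj} keeps every $\hat{\psi}_k$ in $\Psi$ so that all increments remain in $\widetilde{\Psi}$, and $\Omega_2$ pre-absorbs one $\Omega_3$ so the residual count is $i\Omega_3$ rather than $(i+1)\Omega_3$. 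With this uniformity verified, the corollary follows by induction on $i$ exactly as in Lemma \ref{Tif1}.
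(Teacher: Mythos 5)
Your proposal is correct and takes essentially the same route as the paper: Corollary \ref{cor1} is stated there without a separate proof precisely because it is the verbatim time-shift of Lemma \ref{Tif1}, whose inductive argument (via \eqref{xtinW0}, the definitions of $\Omega_2$ and $\Omega_3$, constraint \eqref{cons6m}, and \eqref{hatXtc}) you reproduce with the origin relocated to $t$. Your added details --- the Pontryagin-difference step $(S_{1|t}\ominus\Omega_2)\oplus\Omega_2\subseteq S_{1|t}$, the vertex-interpolation of the controls, and the check that the number of $\Omega_3$ increments is independent of $t$ --- merely make explicit what the paper's ``using similar steps'' leaves implicit.
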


\begin{theo}\label{theo1}
If $\mathbb{OP}2$ is feasible at any time $t$, then it will be recursively feasible at all time $t+i\hspace{0.25cm}\forall i\in\mathbb{I}_1^\infty$.
\end{theo}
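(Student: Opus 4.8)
The plan is to establish recursive feasibility by a one-step induction: assuming $\mathbb{OP}2$ admits a feasible solution $\theta_t$ at time $t$, I would construct an explicit feasible candidate $\theta_{t+1}$ at time $t+1$, so that the claim follows by iterating. Two preconditions must also propagate. By Corollary \ref{cor1}, feasibility at $t$ forces $x_{t+1}\in\mathcal{X}$ and $\tilde{x}_{t+1}\in\mathbb{W}_{oa}$, while the realized uncertainty obeys $\varepsilon_t\in\Omega_2$; combined with constraint \eqref{cons6m} at $i=0$ this yields $\hat{x}_{t+1}\in S_{1|t}\subseteq\widehat{\mathcal{X}}$. Hence $x_{t+1}\in\mathcal{X}$ and $\hat{x}_{t+1}\in\widehat{\mathcal{X}}$, which are exactly the hypotheses needed to re-invoke the construction at $t+1$.

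For the candidate I would shift the optimal tube of time $t$ by one step and graft on a terminal section. Precisely, take the (singleton) first section $S_{0|t+1}=\{\hat{x}_{t+1}\}$ with $\beta_{0|t+1}=\hat{x}_{t+1}$, $\alpha_{0|t+1}=0$; reuse the shifted sections $S_{i|t+1}=S_{i+1|t}$ for $i\in\mathbb{I}_1^{N-1}$ and the shifted vertex controls $u^{vk}_{i|t+1}=u^{vk}_{i+1|t}$ for $i\in\mathbb{I}_1^{N-2}$; and set the terminal section $S_{N|t+1}=\widehat{\mathcal{X}}_T$ with vertex controls $u^{vk}_{N-1|t+1}=Ks^{vk}_{N-1|t+1}$ from the terminal law of Assumption \ref{Ka}. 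For the first control, since $\hat{x}_{t+1}\in S_{1|t}=conv(\{s^{vk}_{1|t}\})$ I would write $\hat{x}_{t+1}=\sum_k\eta_k s^{vk}_{1|t}$ with convex weights $\eta_k$ and set $u_{0|t+1}=\sum_k\eta_k u^{vk}_{1|t}\in\mathcal{U}$. Constraints \eqref{cons1m}-\eqref{cons3m} are then immediate from the construction and from the inclusions carried over from time $t$ (in particular $S_{N-1|t+1}=S_{N|t}\subseteq\widehat{\mathcal{X}}_T$).

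The crux, and the step I expect to be the main obstacle, is verifying the dynamics inclusion \eqref{cons6m} at $t+1$, which must hold with the \emph{updated} parameters $(\hat{A}_{t+1},\hat{b}_{t+1})$ and with the uncertainty index shifted down by one. The enabling observation is that, for any vertex with $s^{vk}\in\widehat{\mathcal{X}}$ and $u^{vk}\in\mathcal{U}$,
\begin{equation*}
\hat{A}_{t+1}s^{vk}+\hat{b}_{t+1}u^{vk}-\big(\hat{A}_{t}s^{vk}+\hat{b}_{t}u^{vk}\big)=(\hat{\psi}_{t+1}-\hat{\psi}_{t})\hat{Z}\in\Omega_3,
\end{equation*}
where $\hat{Z}\triangleq[(s^{vk})^{T}\;u^{vk}]^{T}$, because $\hat{\psi}_{t+1},\hat{\psi}_t\in\Psi$ force $\hat{\psi}_{t+1}-\hat{\psi}_t\in\widetilde{\Psi}$. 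Thus swapping the time-$t$ parameters for the time-$(t+1)$ ones perturbs each predicted vertex by an element of $\Omega_3$. For $i\in\mathbb{I}_1^{N-2}$ I would feed the time-$t$ constraint at step $i+1$, namely $\hat{A}_t s^{vk}_{i+1|t}+\hat{b}_t u^{vk}_{i+1|t}\in S_{i+2|t}\ominus\{\Omega_2\oplus(i+1)\Omega_3\}$, into the Pontryagin-difference identity $(\mathbb{A}\ominus(\mathbb{B}\oplus\mathbb{C}))\oplus\mathbb{C}\subseteq\mathbb{A}\ominus\mathbb{B}$ with $\mathbb{B}=\Omega_2\oplus i\Omega_3$ and $\mathbb{C}=\Omega_3$, obtaining exactly $\hat{A}_{t+1}s^{vk}_{i|t+1}+\hat{b}_{t+1}u^{vk}_{i|t+1}\in S_{i+1|t+1}\ominus\{\Omega_2\oplus i\Omega_3\}$. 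The index $i=0$ follows identically after taking the $\eta_k$-convex combination of the time-$t$ step-$1$ constraint and using convexity of the Pontryagin difference. This is precisely where the horizon-dependent bound $i\Omega_3$ pays off: the one-step reindexing frees one copy of $\Omega_3$ to absorb the parameter drift.

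Finally, for $i=N-1$, Assumption \ref{Ka} gives $(\hat{A}_{t+1}+\hat{b}_{t+1}K)s^{vk}_{N-1|t+1}\in\lambda\widehat{\mathcal{X}}_T$ since $[\hat{A}_{t+1}\;\hat{b}_{t+1}]\in\Psi$ and $s^{vk}_{N-1|t+1}\in S_{N|t}\subseteq\widehat{\mathcal{X}}_T$, and the contractivity relation \eqref{eq29} then yields \eqref{cons6m} for the last step with $S_{N|t+1}=\widehat{\mathcal{X}}_T$, the terminal vertex controls being admissible by the construction of $\widehat{\mathcal{X}}_T$. Having verified every constraint, $\theta_{t+1}$ is feasible for $\mathbb{OP}2$, and induction on $i$ completes the argument. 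I expect the only genuinely delicate bookkeeping to be the simultaneous index shift and parameter update in \eqref{cons6m}; the remaining checks reduce to convexity and the inclusions inherited from time $t$.
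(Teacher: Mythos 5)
Your proposal is correct and follows essentially the same route as the paper's own proof: the shifted candidate tube $S_{i|t+1}=S_{i+1|t}$ with terminal section $\widehat{\mathcal{X}}_T$ and law $K$, the convex-combination construction of $u_{0|t+1}$ from the vertices of $S_{1|t}$, absorbing the parameter drift $(\hat{\psi}_{t+1}-\hat{\psi}_t)\hat{Z}\in\Omega_3$ into the copy of $\Omega_3$ freed by the index shift in \eqref{cons6m}, and closing with Assumption \ref{Ka} and \eqref{eq29} at $i=N-1$. If anything, your write-up is slightly more careful than the paper's (stating the Pontryagin-difference step as an inclusion rather than an equality, and treating the $i=0$ case explicitly via convexity), but the underlying argument is identical.
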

\begin{proof}
Given that $\mathbb{OP}2$ is feasible at time $t$, (\ref{cons1m})-(\ref{cons6m}) are satisfied. For time $t+1$, let a solution be proposed in terms of state and input tubes as $\mathbf{S}_{t+1}=\{S_{0|t+1},S_{1|t+1},...,S_{N|t+1}\}$ and $\mathbf{U}_{t+1}=\{U_{0|t+1},$ $U_{1|t+1},...,U_{N-1|t+1}\}$, where
    \begin{gather}\begin{aligned}\label{RF3}
   &S_{i|t+1}=S_{i+1|t}\hspace{0.2cm}\forall i\in\mathbb{I}_{1}^{N-1},\hspace{0.2cm}S_{N|t+1}=\widehat{\mathcal{X}}_T\\
    &U_{i|t+1}=U_{i+1|t}\hspace{0.3cm}\forall i\in\mathbb{I}_{1}^{N-2},\hspace{0.2cm}U_{N-1|t+1}=K\widehat{\mathcal{X}}_T.
\end{aligned}\end{gather}
The convex combination of the vertices of $S_{1|t}$ that results in $\hat{x}_{t+1}$, is used to get $u_{t+1}$ from $U_{1|t}$. Thus,  $\beta_{0|t+1}=\hat{x}_{t+1}$ and $\alpha_{0|t+1}=0$ implying $S_{0|t+1}=\{\hat{x}_{t+1}\}$ and $U_{0|t+1}=\{u_{t+1}\}$, i.e., $u_{0|t+1}^{v1}=u_{0|t+1}^{v2}...=u_{0|t+1}^{vj}=u_{t+1}$. 

Adding and subtracting $\hat{A}_{t}s_{i|t+1}^{vk}+\hat{b}_{t}u_{i|t+1}^{vk}$ to the LHS of
\eqref{cons6m} at time $t+1$, instead of $t$, $\forall (i,k)\in\mathbb{I}_{0}^{N-2}\times\mathbb{I}_1^j$, we get 
\begin{align*}\label{RF5}
     & \hat{A}_{t+1}s_{i|t+1}^{vk}+\hat{b}_{t+1}u_{i|t+1}^{vk}\\
       =&(\hat{\psi}_{t+1}-\hat{\psi}_t)\begin{bmatrix}{s_{i|t+1}^{vk}}^T & u_{i|t+1}^{vk}
       \end{bmatrix}^T+\hat{A}_{t}s_{i+1|t}^{vk}+\hat{b}_{t}u_{i+1|t}^{vk}\\
       \in&\text{ }\Omega_3\oplus[S_{i+2|t}\ominus\{\Omega_2\oplus (i+1)\Omega_3\}]=S_{i+1|t+1}\ominus(\Omega_2\oplus i\Omega_3).
 \end{align*}
From \eqref{cons3m} and \eqref{RF3}, $S_{N-1|t+1}=S_{N|t}\subseteq \widehat{\mathcal{X}}_T$ and $U_{N-1|t+1}=K\widehat{\mathcal{X}}_T$. Using Assumption \ref{Ka} and \eqref{eq29},  $\forall k\in\mathbb{I}_1^j$, 
\begin{align*}
&\hat{A}_{t+1}s_{N-1|t+1}^{vk}+\hat{b}_{t+1}u_{N-1|t+1}^{vk}=(\hat{A}_{t+1}+\hat{b}_{t+1}K)s_{N-1|t+1}^{vk}\\
\in\;\;&\lambda S_{N-1|t+1}=\lambda S_{N|t}\subseteq\lambda \widehat{\mathcal{X}}_T
\subseteq  S_{N|t+1}\ominus\{\Omega_2\oplus (N-1)\Omega_3\}.
\end{align*}
%

Thus, $\mathbb{OP}2$ is feasible at $t+1$ with the proposed solution. Similarly, it can be proved that $\mathbb{OP}2$ will be feasible at $t+2$ using the solution for $t+1$, and one can recursively continue to prove recursive feasibility at all time $t+i$ $\forall i\in\mathbb{I}_1^\infty$.
\end{proof}

\begin{coro}\label{cor2}
The application of control $u_t=u_{0|t}$ ensures $x_t\in\mathcal{X}$ and $u_t\in U_{0|t}\subseteq\mathcal{U}$ $\forall t\in\mathbb{I}_0^\infty$. (\textit{Proof}: Follows from Lemma \ref{Tif1}, Corollary \ref{cor1}, and Theorem \ref{theo1}. Note: the control input constraint has not been modified in this paper).
\end{coro}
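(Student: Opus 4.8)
The plan is to obtain the claim as a direct consequence of the three results already established, glued together by a short induction on $t$. I would first fix the hypotheses $x_0\in\mathcal{X}$, $\hat{x}_0\in\widehat{\mathcal{X}}$ and initial feasibility of $\mathbb{OP}2$ at $t=0$, and then propagate both feasibility and constraint satisfaction forward in time, treating the state inclusion and the input inclusion as two separate strands.

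For the state constraint, I would invoke Theorem \ref{theo1} to guarantee that $\mathbb{OP}2$ is feasible at every $t\in\mathbb{I}_0^\infty$. Feeding this feasibility into Corollary \ref{cor1} (the all-time extension of Lemma \ref{Tif1}) yields, at each $t$, the inclusion $\tilde{x}_{t+i}\in\mathbb{W}_{oa}$ and hence $x_{t+i}\in\mathcal{X}$ for all $i\in\mathbb{I}_1^N$; in particular the one-step case $x_{t+1}\in\mathcal{X}$ is what I need. Combining $x_{t+1}\in\mathcal{X}$ with the base case $x_0\in\mathcal{X}$ closes the induction and gives $x_t\in\mathcal{X}$ for all $t\in\mathbb{I}_0^\infty$.

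For the input constraint, I would read off the structure of the optimizer directly. By constraint \eqref{cons1m}, $\alpha_{0|t}=0$ forces the section $S_{0|t}$ to collapse to the singleton $\{\hat{x}_t\}$, so the associated control section $U_{0|t}$ is the singleton $\{u_{0|t}\}$; thus the applied input $u_t=u_{0|t}$ is exactly the unique element of $U_{0|t}$. Constraint \eqref{cons2m} with $i=0$ then gives $U_{0|t}\subseteq\mathcal{U}$, whence $u_t\in\mathcal{U}$. The crucial point here---and the one I would emphasize---is that the input set $\mathcal{U}$ was never tightened (no Pontryagin subtraction of an error set was applied to $\mathcal{U}$, unlike the state set), so the inclusion holds against the original $\mathcal{U}$ with no correction term.

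The step I expect to require the most care is not any single inclusion but the bootstrapping that links them: recursive feasibility (Theorem \ref{theo1}) must be established independently of the state-containment argument so that Corollary \ref{cor1} can be applied at every $t$, and the containment conclusion $x_{t+1}\in\mathcal{X}$ must then be fed back as the hypothesis needed to re-invoke Corollary \ref{cor1} at the next instant. Once this induction is set up cleanly, the remaining verifications are immediate from the definitions of the tube sections and the constraints of $\mathbb{OP}2$, and the corollary follows.
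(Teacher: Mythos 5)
Your proposal is correct and follows essentially the same route as the paper: the paper's proof is precisely the citation of Lemma~\ref{Tif1}, Corollary~\ref{cor1}, and Theorem~\ref{theo1} together with the observation that $\mathcal{U}$ is never tightened, and your write-up simply makes explicit the induction (Theorem~\ref{theo1} for feasibility at every $t$, Corollary~\ref{cor1} for $x_{t+1}\in\mathcal{X}$, constraint~(\ref{cons1m})--(\ref{cons2m}) for $u_t\in U_{0|t}\subseteq\mathcal{U}$) that the authors leave implicit. The only detail worth adding is that the bootstrap must also propagate $\hat{x}_{t+1}\in\widehat{\mathcal{X}}$ (which follows from $\hat{x}_{t+1}\in S_{1|t}\subseteq\widehat{\mathcal{X}}$ via~(\ref{cons6m}) and~(\ref{cons2m})), since Corollary~\ref{cor1} requires it as a hypothesis alongside $x_t\in\mathcal{X}$ and feasibility.
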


The recursive feasibility of $\mathbb{OP}2$ implies $\hat{x}_t$ enters $\widehat{\mathcal{X}}_T$ in at most $N$ steps. From \eqref{RF3}, the control inputs being implemented are $u_t=u_{0|t}$ $\forall t\in\mathbb{I}_0^{N-1}$ and $u_t=K\hat{x}_{t}$ $\forall t\in\mathbb{I}_N^\infty$. Using Assumption \ref{Ka}, it follows that $\hat{x}_t$ exponentially converges to the origin if \eqref{epsil1} is used for prediction. Since \eqref{epsil2} replaced \eqref{epsil1}, we arrive at two possible scenarios.

\begin{itemize}
    \item \textit{$\varepsilon_{t+i}$ converges to zero}: This requires $\tilde{p}_t$, and consequently, $\tilde{x}_t$ converging to zero following \eqref{xtilde}. By Remarks \ref{ronuSR} and \ref{L1}, $\varepsilon_{t+i}$ converges to zero if the input is sufficiently rich, \textcolor{black}{which can be achieved following \cite[inequality~(29)]{lorenzen2019robust}}. With such $u_t$, $\lim_{t\rightarrow\infty}\hat{x}_t=x_t$. However, the excitation constraint prevents $\hat{x}_t$ and $x_t$ from settling at zero even as $t\rightarrow\infty$.  
    \item \textit{$\varepsilon_{t+i}$ does not converge to zero}: In absence of any excitation, there is no guarantee that $\tilde{p}_t$ and hence, $\tilde{x}_t$ will converge to zero. At best, it can be guaranteed that $x_t\in\hat{x}_t\oplus\mathbb{W}_{oa}$.
\end{itemize}

In either case, recursive feasibility of $\mathbb{OP}2$ guarantees that $\hat{x}_t, u_t\in \mathcal{L}_\infty$ $\forall t\in\mathbb{I}_0^\infty$. By definition of $\widehat{\mathcal{X}}$, $\hat{x}_t\in \mathcal{L}_\infty\Rightarrow x_t\in \mathcal{L}_\infty$ $\forall t\in\mathbb{I}_0^\infty$. Additionally, \eqref{Proj} ensures $\hat{A}_t,\hat{b}_t\in \mathcal{L}_\infty$ $\forall t\in\mathbb{I}_0^\infty$.

\section{Numerical Example}
We consider a $2^{\text{nd}}$ order LTI system\footnote{\textcolor{black}{A $2^{\text{nd}}$ order example is chosen for ease of visualization of the tubes, although implementation on higher order systems can also be achieved, albeit at a higher computational cost.}}
\begin{align*}
   x_{t+1}= \begin{bmatrix}-0.6273 & 1\\ 0.4564 & 0
   \end{bmatrix}x_t+\begin{bmatrix}-0.1818\\0.0909
   \end{bmatrix}u_t\hspace{0.1cm}\text{ ; }\hspace{0.1cm}y_t=\begin{bmatrix}1&0
   \end{bmatrix}x_t.
\end{align*}
The set $\Psi=conv(\{\psi^{vi}\;|\;i\in\mathbb{I}_1^3\})$ with $\psi^{v1}=[-0.7\hspace{0.2cm} 1\hspace{0.2cm} -0.2\; ;$\\ $0.5\hspace{0.2cm} 0 \hspace{0.2cm}0.1]$, $\psi^{v2}=[-0.2\hspace{0.2cm} 1\hspace{0.2cm} 0.1\; ;-0.08\hspace{0.2cm} 0 \hspace{0.2cm}0]$ and $\psi^{v3}=$ $[0.1\hspace{0.2cm}1\hspace{0.2cm} 0\; ;0.02\hspace{0.2cm} 0 \hspace{0.2cm}0]$. The initial conditions are $\hat{\psi}_0=[0.092$\newpage\noindent $1\hspace{0.2cm}-0.002;\;0.0248$ $0\hspace{0.2cm}0.001]$, $x_0=[30;\;19]$, $\hat{x}_0=[25;\;15]$ and $\mathbb{W}_0=conv(\{[-1.9951;\;7.0747]{,}\;[5.0796;\;4.6056],$ $[8.0922;-6.0771],\;[1.9951;-7.0747],\;[-8.0922;6.0771],$ $[-5.0796;-4.6056] \})$. The state and input constraints are: $||x_t||_\infty\leq38$ and $||u_t||_\infty\leq45.5$. The other parameters for AOFMPC are: $N=7$, $\Gamma_0=100I_4$, $\zeta=0.25$, $Q=I_2$, $r=0.1$ 
and $F=[0.49$ $1;\;-0.5\hspace{0.2cm}0]$. The simulations have been carried out using \cite{MPT3} and \cite{Lofberg2004}.

    \begin{figure}[t!]
     \vspace{0.23cm} \centering
     \framebox{\parbox{3in}{\includegraphics[scale=0.4]{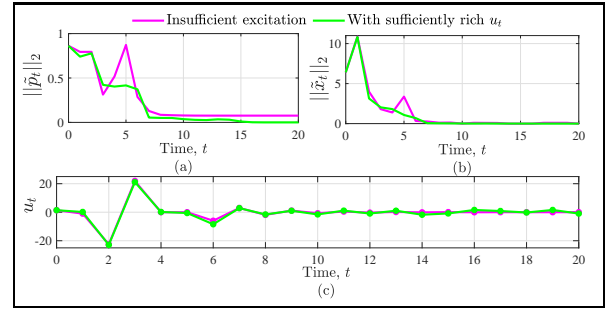}}}
     \caption{(a) $2$-norm of parameter estimation error, (b) $2$-norm of state estimation error, in absence and presence of sufficient excitation, and (c) corresponding input.} 
      \label{ParaE}
   \end{figure}
   \begin{figure}[t!]
      \centering
     \framebox{\parbox{3in}{\includegraphics[scale=0.4]{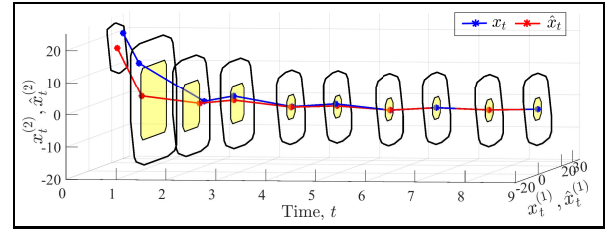}}}
      \caption{State tubes for $N=7$. The inner (yellow) tube and the outer (combined yellow and white) tubes correspond to the case when $u_t$ is not sufficiently rich. Superscripts on $x_t$ and $\hat{x}_t$ denote the respective components.} 
      \label{STube}
   \end{figure}
   \begin{table}[t!]\caption{Comparison of AOFMPC and \cite{dhar2021indirect}.}\vspace{-0.3cm}
    \label{table1}\begin{center}
    \resizebox{\columnwidth}{!}{
     \begin{tabular}{|c|c|c|c|c|c|c|}
\hline
& \multicolumn{2} {|c|}{\textit{From $t=0$ to $3$ }} & \multicolumn{2}{|c|}{\textit{From $t=0$ to $15$}}
& \multicolumn{2}{|c|}{\textit{From $t=0$ to $30$}}
\\
&\textbf{ AOFMPC}&\textbf{\cite{dhar2021indirect}}& \textbf{ AOFMPC}&\textbf{\cite{dhar2021indirect}}& \textbf{ AOFMPC}&\textbf{\cite{dhar2021indirect}}\\ \hline
  
  
  
  
  
  

 {\scriptsize{RMS}} $||x_t||_2$&20.4261 & 18.7163& 10.2985&9.6443 & 7.3987& 6.9004\\ \hline
  
  {\scriptsize{RMS}} $||\tilde{p}_t||_2$& 0.7224& 0.8598& 0.4469& 0.6588& 0.3254& 0.6584\\ \hline
  
{$\sum_{\hspace{0.05cm}t}\;\mathbb{J}_t$} & 2569.4& 2101.2&2737.3 & 2198.2& 2737.5& 2199.1 \\ \hline

    \end{tabular}
   }
    \end{center}\end{table} 
 
Figs. \ref{ParaE}(a) and (b) demonstrate two cases,- without sufficient excitation, $||\tilde{p}_t||_2$ and $||\tilde{x}_t||_2$ reach the vicinity of zero, whereas when $u_t$ is sufficiently rich\footnote{\textcolor{black}{Obtained by adding a dithering signal of amplitude $0.5$ consisting of sinusoidals of five co-prime frequencies, that satisfy the excitation condition.}}, both $||\tilde{p}_t||_2$ and $||\tilde{x}_t||_2$ converge to zero. The implemented controls are shown in Fig. \ref{ParaE}(c). Fig. \ref{STube} shows the tubes for the case when $u_t$ is not sufficiently rich. The inner tube (in yellow) is for $\hat{x}_t$ (in red line); the blue line is for $x_t$ that evolves inside the outer tube. 

\textcolor{black}{A comparison of AOFMPC with \cite{dhar2021indirect} is made in Table \ref{table1}} (extra parameters defined in \cite{dhar2021indirect} are $\lambda=5.6\times10^{-4}$ and $\alpha=0.005$). The root mean square (RMS) value of $||x_t||_2$ is smaller in \cite{dhar2021indirect} during the initial time instants but is comparable with AOFMPC in the steady state. Parameter estimation is better with AOFMPC; heuristically, this may be linked to the use of RLS in AOFMPC and gradient descent in \cite{dhar2021indirect}. The value of $\sum_t\mathbb{J}_t\triangleq x_t^Tx_t+u_t^2$ is higher in AOFMPC; this is expected since AOFMPC does not use state measurements and instead utilizes estimates generated from the observer. On the other hand, \cite{dhar2021indirect} uses full state measurements and incurs a smaller cost. 


\textcolor{black}{Two more cases are illustrated},- (i) with measurement noise $y_{n_t}$ (where $y_t=cx_t+y_{n_t}$), and (ii) with disturbance $x_{d_t}$ (where $x_{t+1}=Ax_t+bu_t+x_{d_t}$). The signals $y_{n_t}$ and $x_{d_t}$ are randomly generated with amplitude between $0$ and $1$. Figs. \ref{ParaN}(a)-(d) show plots of $||\tilde{p}_t||_2$, $||\tilde{x}_t||_2$, $||x_t||_2$ and $u_t$, respectively, with $x_0=[5;\;-1]$ and $\hat{x}_0=[4;\;3]$. AOFMPC is feasible in such cases with noise/disturbance albeit with a smaller initial feasible region, which is a trade-off with robust guarantees.
 \begin{figure}[t!]
   \vspace{0.23cm}
    \centering
     \framebox{\parbox{3in}{\includegraphics[scale=0.38]{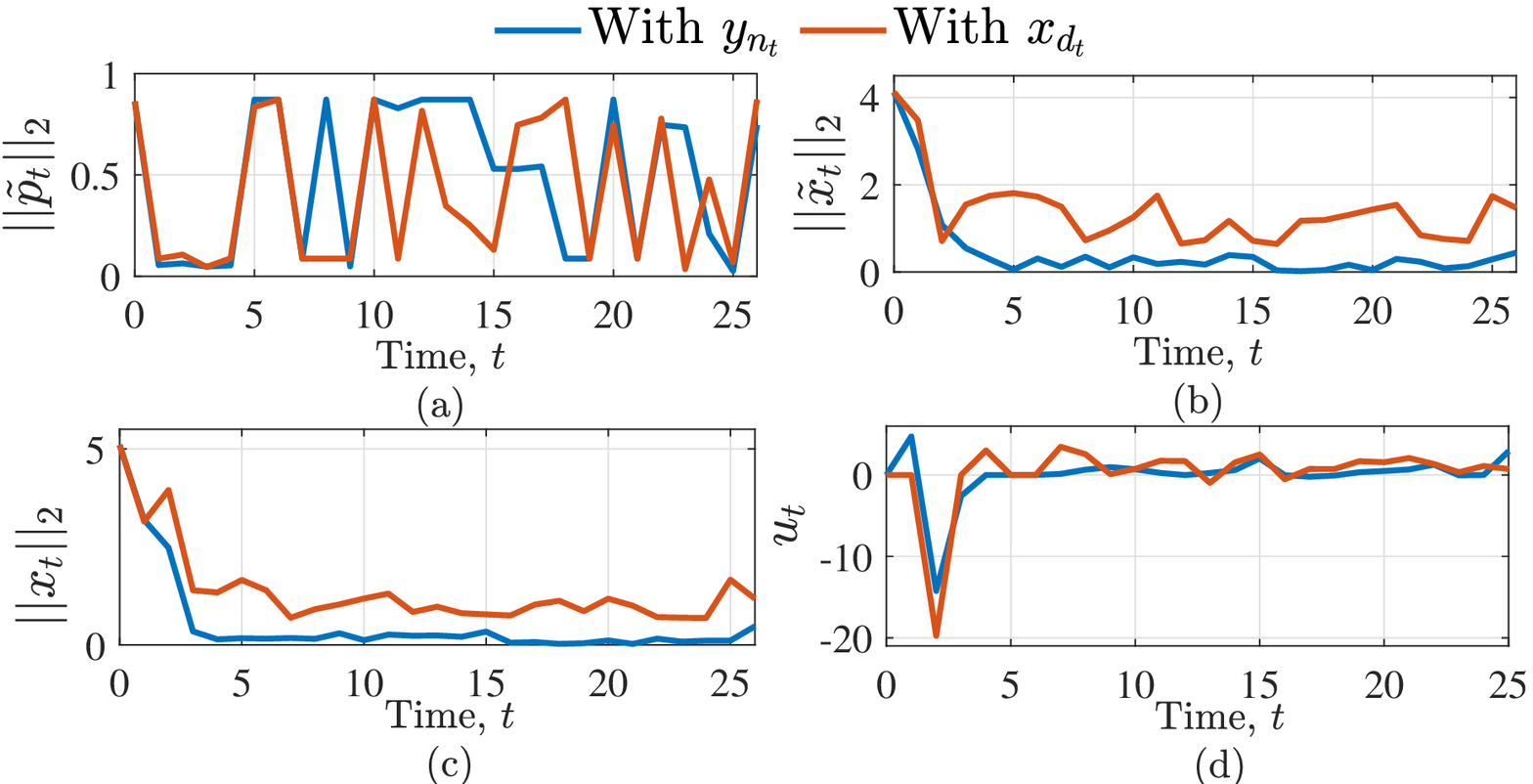}}}
     \caption{(a) $2$-norm of parameter estimation error, (b) $2$-norm of state estimation error, (c) $2$-norm of state and (d) implemented input in presence of noise $y_{n_t}$ and disturbance $x_{d_t}$. }
      \label{ParaN}
  \end{figure} 
\section{Conclusion}
The paper proposes a technique to solve the COCP for uncertain discrete-time SISO LTI systems using only output measurements. The solution approach involves an MPC using estimates of the states and system parameters, which are obtained simultaneously from an adaptive observer at each time instant. Reformulating the COCP using the online available estimates leads to the introduction of uncertainties in the state estimate predictions, and is tackled using a homothetic tube. Additionally, an invariant set for the state estimation error is characterized. The optimization routine ensures that the state estimates are in a constraint set tightened by the invariant set. Adding the invariant set to the homothetic tube sections creates a larger tube that contains the actual state trajectory. The two-tube architecture ensures that the hard constraint on the actual state is never violated for any possible value of the state estimation error. Simulation results show the performance of the proposed AOFMPC with and without a sufficiently rich input. A detailed analysis of stability of the proposed theory will be done as a part of future work. Some immediate extensions of this work is to design AOFMPC for multi-input multi-output systems and \textcolor{black}{for cases of noisy measurements and/or external disturbances,} \textcolor{black}{and to exploit the adaptation for gaining better knowledge of the error bound and thus, reduce conservatism.}





\bibliographystyle{IEEEtran}
\bibliography{IEEEabrv,reference}

\begin{thebibliography}{10}
\providecommand{\url}[1]{#1}
\csname url@rmstyle\endcsname
\providecommand{\newblock}{\relax}
\providecommand{\bibinfo}[2]{#2}
\providecommand\BIBentrySTDinterwordspacing{\spaceskip=0pt\relax}
\providecommand\BIBentryALTinterwordstretchfactor{4}
\providecommand\BIBentryALTinterwordspacing{\spaceskip=\fontdimen2\font plus
\BIBentryALTinterwordstretchfactor\fontdimen3\font minus
  \fontdimen4\font\relax}
\providecommand\BIBforeignlanguage[2]{{%
\expandafter\ifx\csname l@#1\endcsname\relax
\typeout{** WARNING: IEEEtran.bst: No hyphenation pattern has been}%
\typeout{** loaded for the language `#1'. Using the pattern for}%
\typeout{** the default language instead.}%
\else
\language=\csname l@#1\endcsname
\fi
#2}}

\bibitem{kouvaritakis2016model}
B.~Kouvaritakis and M.~Cannon, ``Model predictive control,'' \emph{Switzerland:
  Springer International Publishing}, vol.~38, 2016.

\bibitem{langson2004robust}
W.~Langson, I.~Chryssochoos, S.~Rakovi{\'c}, and D.~Q. Mayne, ``Robust model
  predictive control using tubes,'' \emph{Automatica}, vol.~40, no.~1, pp.
  125--133, 2004.

\bibitem{rakovic2012homothetic}
S.~V. Rakovi{\'c}, B.~Kouvaritakis, R.~Findeisen, and M.~Cannon, ``Homothetic
  tube model predictive control,'' \emph{Automatica}, vol.~48, no.~8, pp.
  1631--1638, 2012.

\bibitem{lorenzen2019robust}
M.~Lorenzen, M.~Cannon, and F.~Allg{\"o}wer, ``Robust {MPC} with recursive
  model update,'' \emph{Automatica}, vol. 103, pp. 461--471, 2019.

\bibitem{hernandez2016persistently}
B.~Hernandez and P.~Trodden, ``Persistently exciting tube {MPC},'' in
  \emph{2016 American Control Conference}, 2016, pp. 948--953.

\bibitem{dhar2021indirect}
A.~Dhar and S.~Bhasin, ``Indirect adaptive {MPC} for discrete-time {LTI}
  systems with parametric uncertainties,'' \emph{{IEEE} Trans. Automat.
  Contr.}, vol.~66, no.~11, pp. 5498--5505, 2021.

\bibitem{zhu2019constrained}
B.~Zhu, Z.~Zheng, and X.~Xia, ``Constrained adaptive model-predictive control
  for a class of discrete-time linear systems with parametric uncertainties,''
  \emph{{IEEE} Trans. Automat. Contr.}, vol.~65, no.~5, pp. 2223--2229, 2019.

\bibitem{rosolia2017learning}
U.~Rosolia and F.~Borrelli, ``Learning model predictive control for iterative
  tasks. {A} {Data}-{Driven} {Control} {Framework},'' \emph{{IEEE} Trans.
  Automat. Contr.}, vol.~63, no.~7, pp. 1883--1896, 2017.

\bibitem{bujarbaruah2018adaptive}
M.~Bujarbaruah, X.~Zhang, U.~Rosolia, and F.~Borrelli, ``Adaptive {MPC} for
  iterative tasks,'' in \emph{2018 IEEE Conference on Decision and Control},
  2018, pp. 6322--6327.

\bibitem{mayne2006robust}
D.~Q. Mayne, S.~V. Rakovi{\'c}, R.~Findeisen, and F.~Allg{\"o}wer, ``Robust
  output feedback model predictive control of constrained linear systems,''
  \emph{Automatica}, vol.~42, no.~7, pp. 1217--1222, 2006.

\bibitem{kogel2017robust}
M.~K{\"o}gel and R.~Findeisen, ``Robust output feedback {MPC} for uncertain
  linear systems with reduced conservatism,'' \emph{IFAC-PapersOnLine},
  vol.~50, no.~1, pp. 10\,685--10\,690, 2017.

\bibitem{subramanian2017novel}
S.~Subramanian, S.~Lucia, and S.~Engell, ``A novel tube-based output feedback
  {MPC} for constrained linear systems,'' in \emph{2017 American Control
  Conference}, 2017, pp. 3060--3065.

\bibitem{ghanes2016robust}
M.~Ghanes, M.~Trabelsi, H.~Abu-Rub, and L.~Ben-Brahim, ``Robust adaptive
  observer-based model predictive control for multilevel flying capacitors
  inverter,'' \emph{{IEEE} Trans. Ind. Electron.}, vol.~63, no.~12, pp.
  7876--7886, 2016.

\bibitem{brunke2021rlo}
L.~Brunke, S.~Zhou, and A.~P. Schoellig, ``{RLO}-{MPC}: Robust learning-based
  output feedback mpc for improving the performance of uncertain systems in
  iterative tasks,'' in \emph{2021 60th IEEE Conference on Decision and
  Control}, 2021, pp. 2183--2190.

\bibitem{coulson2019data}
J.~Coulson, J.~Lygeros, and F.~D{\"o}rfler, ``Data-enabled predictive control:
  In the shallows of the {DeePC},'' in \emph{2019 18th European Control
  Conference}, 2019, pp. 307--312.

\bibitem{berberich2020data}
J.~Berberich, J.~K{\"o}hler, M.~A. M{\"u}ller, and F.~Allg{\"o}wer,
  ``Data-driven model predictive control with stability and robustness
  guarantees,'' \emph{{IEEE} Trans. Automat. Contr.}, vol.~66, no.~4, pp.
  1702--1717, 2020.

\bibitem{nguyen2020output}
T.~W. Nguyen, S.~A.~U. Islam, A.~L. Bruce, A.~Goel, D.~S. Bernstein, and I.~V.
  Kolmanovsky, ``Output-feedback {RLS}-based model predictive control,'' in
  \emph{2020 American Control Conference}, 2020, pp. 2395--2400.

\bibitem{suzuki1980discrete}
T.~Suzuki, T.~Nakamura, and M.~Koga, ``Discrete adaptive observer with fast
  convergence,'' \emph{International Journal of Control}, vol.~31, no.~6, pp.
  1107--1119, 1980.

\bibitem{kudva1974discrete}
P.~Kudva and K.~S. Narendra, ``The discrete adaptive observer,'' in \emph{1974
  IEEE Conference on Decision and Control including the 13th Symposium on
  Adaptive Processes}, 1974, pp. 307--312.

\bibitem{rakovic2005invariant}
S.~V. Rakovic, E.~C. Kerrigan, K.~I. Kouramas, and D.~Q. Mayne, ``Invariant
  approximations of the minimal robust positively invariant set,'' \emph{{IEEE}
  Trans. Automat. Contr.}, vol.~50, no.~3, pp. 406--410, 2005.

\bibitem{heirung2017dual}
T.~A.~N. Heirung, B.~E. Ydstie, and B.~Foss, ``Dual adaptive model predictive
  control,'' \emph{Automatica}, vol.~80, pp. 340--348, 2017.

\bibitem{goodwin2014adaptive}
G.~C. Goodwin and K.~S. Sin, \emph{Adaptive filtering prediction and
  control}.\hskip 1em plus 0.5em minus 0.4em\relax Courier Corporation, 2014.

\bibitem{boyd1986necessary}
S.~Boyd and S.~S. Sastry, ``Necessary and sufficient conditions for parameter
  convergence in adaptive control,'' \emph{Automatica}, vol.~22, no.~6, pp.
  629--639, 1986.

\bibitem{ioannou2006adaptive}
P.~Ioannou and B.~Fidan, \emph{Adaptive control tutorial}.\hskip 1em plus 0.5em
  minus 0.4em\relax SIAM, 2006.

\bibitem{darup2016computation}
M.~S. Darup and M.~Cannon, ``On the computation of $\lambda$-contractive sets
  for linear constrained systems,'' \emph{{IEEE} Trans. Automat. Contr.},
  vol.~62, no.~3, pp. 1498--1504, 2016.

\bibitem{MPT3}
M.~Herceg, M.~Kvasnica, C.~Jones, and M.~Morari, ``{Multi-Parametric Toolbox
  3.0},'' in \emph{Proc.~of the European Control Conference}, Z\"urich,
  Switzerland, July 17--19 2013, pp. 502--510.

\bibitem{Lofberg2004}
J.~L{\"{o}}fberg, ``{YALMIP} : A toolbox for modeling and optimization in
  {MATLAB},'' in \emph{In Proceedings of the CACSD Conference}, Taipei, Taiwan,
  2004.

\end{thebibliography}

\end{document}